\newcommand{\N}{\mathbb N}
\newcommand{\R}{\mathbb R}
\newcommand{\C}{\mathbb C}
\newcommand{\de}{\partial}
\newcommand{\ha}{\mathcal H}
\newcommand{\norm}[1]{\left\Vert #1\right\Vert}
\newcommand{\vc}[1]{\boldsymbol #1}
\newcommand{\lap}{\Delta}
\newcommand{\dvg}{\operatorname{\mathrm{div}}}
\newcommand{\sym}{\operatorname{\mathrm{Sym}}}
\newcommand{\vt}[1]{\mathsf #1}
\newcommand{\bs}{\Sigma}
\newcommand{\rey}{\mathit{Re}}
\newcommand{\tsp}{\intercal}
\newcommand{\lop}{\operatorname{\mathcal{L}}}
\newtheorem{thm}{Theorem}[section]
\newtheorem{lem}[thm]{Lemma}
\theoremstyle{definition}
\theoremstyle{definition}
\newtheorem{rem}{Remark}
\theoremstyle{remark}
\title[1D rigid bodies in hyperviscous fluids] 
{Nonlinear free fall of one-dimensional rigid bodies in hyperviscous fluids}
\author[G. G. Giusteri, A. Marzocchi and A. Musesti]{}
\subjclass{Primary: 76D03; Secondary: 35Q35.}
\keywords{Slender-body theory, fluid-structure
  interaction, hyperviscosity, dimensional reduction}
 \email{giulio.giusteri@unicatt.it}
 \email{alfredo.marzocchi@unicatt.it}
 \email{alessandro.musesti@unicatt.it}
\begin{document}

\maketitle

\centerline{\scshape Giulio G. Giusteri, Alfredo Marzocchi and
  Alessandro Musesti }
\medskip
{\footnotesize
 \centerline{Dipartimento di Matematica e Fisica ``N. Tartaglia''}
   \centerline{Universit\`a Cattolica del Sacro Cuore}
   \centerline{Via dei Musei 41, I-25121 Brescia, Italy}
}
\bigskip


\begin{abstract}
We consider the free fall of slender rigid bodies in a viscous incompressible fluid. We show that the dimensional reduction (DR), performed by substituting the slender bodies with one-dimensional rigid objects, together with a hyperviscous regularization (HR) of the Navier--Stokes equation for the three-dimensional fluid lead to a well-posed fluid-structure interaction problem. In contrast to what can be achieved within a classical framework, the hyperviscous term permits a sound definition of the viscous force acting on the one-dimensional immersed body, and global-in-time existence and uniqueness of a solution can be proved. Those results show that the DR/HR procedure can be effectively employed for the mathematical modeling of the free fall problem in the slender-body limit. 
\end{abstract}

\section{Introduction}

Composite systems where microscopic bodies move in a viscous fluid are ubiquitous in both biological and technological contexts. The shape of such immersed bodies often displays some slenderness: one can encounter flat bodies, elongated bodies, somehow entangled filaments, and often simply point-like particles. For both mechanical and computational reasons it is tempting to approximate slender three-dimensional bodies with lower-dimensional objects,
and a slender-body theory, especially for low-Reynolds-number flows, has been widely developed (see \emph{e.g.}\ \cite{Bat70,ChwWu75,Lig75,Lig76,KelRub76,Joh80}).
Nevertheless, such a theory often relies on asymptotic expansions and sometimes crude approximations, mainly because of the difficulty in dealing with complex geometries and of the lack of a sound mathematical model for the interaction between three-dimensional fluids and lower-dimensional immersed bodies. The key obstacle to the development of such models is discussed in Remark~\ref{rem:nonex}.

Here we follow the spirit of a new slender-body theory proposed in~\cite{GiuFri12}, which, in contrast to previously developed alternatives, might be viewed as a modeling scheme, based on the concepts of dimensional reduction and hyperviscous regularization. The geometry of slender bodies immersed in a viscous fluid is approximated by lower-dimensional objects, while adding a hyperviscous term to the flow equation. The hyperviscosity, an additional parameter entering the equation, is given by the product of the ordinary viscosity with the square of a length that replaces the characteristic size of the body along the dimensions that vanish in the slender-body limit.

The classical Navier--Stokes equation for incompressible Newtonian fluids reads
\begin{equation*}
\rho\frac{\de\vc u}{\de t}+\rho(\vc u\cdot\nabla)\vc u+\nabla p-\mu\lap\vc u=\rho\vc b\,,
\end{equation*}
where $p$ is the pressure field, $\vc u$ is the divergence-free velocity field, $\rho>0$ is the constant and homogeneous mass density, $\mu>0$ is the dynamic viscosity and $\rho\vc b$ is a volumetric force density. It is well-known that the regularity of solutions for the three-dimensional Navier--Stokes equation is still an open problem, and various modifications of that equation with better regularity theories have been analyzed. Among those, the hyperviscous regularization (see, for example, \cite{Lio69}, Chap.~I, Remarque 6.11) consists of adding a term proportional to $\lap\lap\vc u$ to the equation. For this modified equation,
\begin{equation*}
\rho\frac{\de\vc u}{\de t}+\rho(\vc u\cdot\nabla)\vc u+\nabla p-\mu\lap\vc u
+\zeta\lap\lap\vc u=\rho\vc b\,,
\end{equation*}
where $\zeta>0$ is the hyperviscosity, existence and uniqueness of regular solutions have been established.

Despite the mathematical appeal of the hyperviscous regularization, it is generally hard to assign a relevant physical meaning to $\zeta$, other than that of a higher-order dissipation coefficient. In a series of papers \cite{FriGur06,Mus09,GiuMar11,Giu12}, different contributions to $\zeta$ associated with dissipation functionals are introduced and analyzed.
Here, as stated above, we assign to $\zeta$ a geometric, rather than dynamical, meaning, by introducing the \emph{effective thickness} $L>0$ of the lower-dimensional objects, and setting $\zeta=\mu L^2$, so that the hyperviscous flow equation becomes 
\begin{equation}\label{eq:HNS}
\rho\frac{\de\vc u}{\de t}+\rho(\vc u\cdot\nabla)\vc u+\nabla p-\mu\lap\left(\vc u
-L^2\lap\vc u\right)=\rho\vc b\,.
\end{equation}

It has been shown in~\cite{GiuMar10} that the flow generated by one-dimensional rigid bodies, moving with a \emph{prescribed} velocity in a fluid governed by the hyperviscous flow equation, can be uniquely determined. The purpose of the present note is to establish the existence and uniqueness of solutions for the \emph{free fall} of a slender rigid body $\bs$ in a hyperviscous fluid. In the problem at hand, the velocity of the body is no longer prescribed, and it has to be determined, together with the fluid velocity, by suitably coupling the flow equations with the equations describing the motion of the rigid body.
The free fall of three-dimensional rigid bodies in a Newtonian fluid has been widely studied; an excellent review of this topic can be found in~\cite{Gal02}, and we will follow similar arguments in our analysis. The main novelty here is, of course, the gap between the dimensions of the body and those of the fluid, which can be handled by a suitable definition, presented in Section~\ref{sec:drhr}, of the hydrodynamic force exerted by the hyperviscous fluid on the rigid body.

\section{Formulation of the free fall problem}\label{sec:formulation}

The free fall problem is characterized by the fact that the rigid body is immersed and dropped from rest in an otherwise quiescent fluid filling all of space, and gravity is the only external force acting on the system.
We represent a rigid body as a connected and bounded closed subset $\bs$ of $\R^3$ which is a finite union of images of $[0,1]$ through a $C^1$-diffeomorphism. In this way, the one-dimensional Hausdorff measure of $\bs$, $\ha^1(\bs)$, is finite and positive.
It is convenient to write the problem in a co-moving frame, with origin at the center of mass $\vc c(t)$ of $\bs$.  
Denoting by $\vc y$ the position of a point in the original inertial frame, and by $\vc x$ its position in the co-moving frame, we know that, at any time $t\geq 0$,
\[
\vc x=\vt Q^\tsp(t)(\vc y-\vc c(t))\,,
\]
where $\vt Q(t)$ is an orthogonal linear transformation for any $t\geq 0$, with $\vt Q(0)=\vt 1$.
If the velocity of the center of mass and the spin of the rigid body in the inertial frame are denoted by $\vc\eta(t)$ and $\vc\varOmega(t)$, respectively, so that
\[
\vc v(t)=\vc\eta(t)+\vc\varOmega(t)\times(\bar{\vc y}-\vc c(t))
\]
is the velocity, in that frame, of any point $\bar{\vc y}$ belonging to the rigid body, then their expression in the co-moving frame is given by
\[
\vc\xi(t):=\vt Q^\tsp(t)\vc\eta(t)\quad\text{and}\quad\vc\omega(t):=\vt Q^\tsp(t)\vc\varOmega(t)\,,
\]
respectively, and the rigid velocity field $\vc v$ is transformed into
\[
\vc U(t):=\vc\xi(t)+\vc\omega(t)\times\bar{\vc x}\,, 
\]
where $\bar{\vc x}$ denotes the coordinates of $\bar{\vc y}$ in the co-moving frame.

We also note that the gravitational acceleration vector $\vc g$ (constant in the inertial frame) is represented, in the co-moving frame, by $\vc G(t):=\vt Q^\tsp(t)\vc g$, which is easily seen to satisfy the ordinary differential equation
\begin{equation}\label{eq:G}
\frac{d\vc G}{dt}=\vc G\times\vc\omega\,.
\end{equation}

As customary when studying flows past rigid bodies, the velocity field $\vc u$ that we consider is the so called \emph{disturbance field}, which is the difference between the actual flow and the flow at infinity, both seen in the co-moving frame. Since the flow at infinity is $-\vc U$ (that is minus the extension to all of the fluid of the motion of the immersed object), the representation of the fluid flow in the co-moving frame is given by $\vc u-\vc U$.

The continuity and flow equations for an incompressible (disturbance) velocity field $\vc u(\vc x,t)$ and pressure field $p(\vc x,t)$ defined on $(\R^3\setminus\bs)\times[0,+\infty)$, become
\begin{equation}\label{eq:incomp}
\dvg\vc u=0\,,
\end{equation}
\begin{equation}\label{eq:floweq1}
\rho\left(\frac{\de\vc u}{\de t}+[(\vc u-\vc U)\cdot\nabla]\vc u+\vc\omega\times\vc u\right)=\dvg\vt T(\vc u,p)+\rho\vc G\,,
\end{equation}
where $\vt T(\vc u,p)$ denotes the Cauchy stress tensor.
Note that, thanks to its frame indifference properties, $\vt T$ retains the same functional dependence on the velocity field seen in both the inertial frame and in the co-moving one.

We have also a decay condition
\begin{equation}\label{eq:decay}
\lim_{|\vc x|\to\infty}\vc u(\vc x,t)=\vc 0\,,
\end{equation}
and the adherence to the rigid body, given by
\begin{equation}\label{eq:adherence}
\vc u(\vc x,t)=\vc U(\vc x,t)\qquad\text{on }\bs\times[0,+\infty)\,.
\end{equation}

The equations of motion for the rigid body in the co-moving frame are then
\begin{equation}\label{eq:linmom}
m\frac{d\vc\xi}{dt}+m\vc\omega\times\vc\xi=m_e\vc G+\vc f
\end{equation}
and
\begin{equation}\label{eq:angmom}
\vt J\frac{d\vc\omega}{dt}+\vc\omega\times(\vt J\vc\omega)=-m_c(\vc r\times\vc G)+\vc t\,,
\end{equation}
where $m$ is the total mass of the rigid body, $m_e$ is its \emph{effective mass}, $m_c$ is the \emph{complementary mass}, $\vt J$ its inertia tensor, $\vc r$ is the position of the centroid\footnote{The centroid of a rigid body coincide with its center of mass when the body has a uniform mass density; in the latter case $\vc r=\vc 0$.} of $\bs$ in the co-moving frame, and $\vc f$ and $\vc t$ are the total hydrodynamic force and torque, respectively, exerted on the rigid body as a consequence of the fluid flow $\vc u$ and pressure $p$. The proper definition of $\vc f$ and $\vc t$ is discussed in Section~\ref{sec:drhr}. In view of the interpretation of one-dimensional bodies as representations for real three-dimensional objects,
to properly account for Archimedean forces which would vanish in the slender-body limit, we have to consider from the very beginning $m_e=m-m_c$ as the difference between the real mass of the object and the mass $m_c$ of a portion of fluid occupying the real volume of the object, and we must add the terms proportional to $m_e$ and $m_c$ in equations~\eqref{eq:linmom} and~\eqref{eq:angmom}, respectively.

The whole set of equations \eqref{eq:G}--\eqref{eq:angmom} represents the differential problem associated with the free fall of a rigid object $\bs$ in an incompressible fluid. The physical properties of such a fluid are encoded in the Cauchy stress tensor $\vt T$, and the peculiar form we are going to use throughout the present paper is discussed in Section~\ref{sec:drhr}.

It is convenient to consider the non-dimensional form of problem \eqref{eq:G}--\eqref{eq:angmom}: by choosing suitable reference length
$d$, proportional to the diameter of $\bs$, and speed $W=\rho
  g d^2/\mu$, we can switch to non-dimensional quantities
according to
\begin{gather*}
\vc x\to\frac{\vc x}{d}\;,\quad t\to\frac{t\mu}{\rho d^2}\;,\quad\vc
u\to\frac{\vc u}{W}\;,\quad \vc \xi\to\frac{\vc \xi}{W}\;,\quad
\vc\omega\to\frac{\vc \omega d}{W}\;,\\
p\to\frac{pd}{\mu W}\;,\quad
m\to\frac{m}{\rho d^3}\;,\quad\vc G\to \frac{\vc G}{g}\;,\quad\vc g\to \frac{\vc g}{g}\;,
\end{gather*}
obtaining
\begin{align}
&\dvg\vc u=0\,,\label{eq:nd1}\\
&\frac{\de\vc u}{\de t}+\rey\left\{[(\vc u-\vc U)\cdot\nabla]\vc u+\vc\omega\times\vc u\right\}=\dvg\vt T(\vc u,p)+\vc G\,,\label{eq:nd2}\\
&\lim_{|\vc x|\to\infty}\vc u(\vc x,t)=\vc 0\,,\\
&\vc u(\vc x,t)=\vc U(\vc x,t)\qquad\text{on }\bs\times[0,+\infty)\,,\label{eq:nd4}\\
&m\frac{d\vc\xi}{dt}+\rey(m\vc\omega\times\vc\xi)=m_e\vc G+\vc f\,,\label{eq:nd5}\\
&\vt J\frac{d\vc\omega}{dt}+\rey[\vc\omega\times(\vt J\vc\omega)]=-m_c(\vc r\times\vc G)+\vc t\,,\label{eq:nd6}\\
&\frac{d\vc G}{dt}=\rey(\vc G\times\vc\omega)\,,\label{eq:nd7}
\end{align}
with initial conditions
\begin{equation}\label{eq:nd8}
\vc u(\vc x,0)=\vc\xi(0)=\vc\omega(0)=\vc 0\,,\quad\vc G(0)=\vc g\,.
\end{equation}
where $\rey=\rho Wd/\mu=\rho^2gd^3/\mu^2$ is the Reynolds number and every quantity has
to be understood as non-dimensional.

The low-Reynolds-number approximation of the differential problem,
which is also a linearization of the equation for the flow, is
obtained by neglecting the terms proportional to $\rey$ in
equations~\eqref{eq:nd2}, \eqref{eq:nd5}, and \eqref{eq:nd6}. When
considering free fall problems, the energy budget is determined by
gravitational forces and viscous dissipation; hence the limit $\rey\to
0$ corresponds to the situation where the latter prevails. In the
meanwhile, the geometric parameters $d$ and $L$ do not need to be small,
even in that limit.
Notice that equation \eqref{eq:nd7} remains unchanged, since it
represents a geometric constraint which holds for any non-vanishing
value of $\rey$.

\section{The viscous force acting on a slender body}\label{sec:drhr}

The constitutive theory for non-simple fluids leading to a hyperviscous flow equation has been developed in~\cite{FriGur06,Mus09,GiuMar11}. It offers a number of possible choices for the terms to be included in $\vt T$, in addition to those of Newtonian fluids. Here we make a somewhat minimal assumption, obtaining a fluid which is quasi-Newtonian, while being able to adhere to lower-dimensional objects. As explained in~\cite{Giu12}, we can simply take
\begin{equation}\label{eq:defT}
\vt T(\vc u,p):=-p\vt 1+\left(\nabla\vc u+\nabla\vc u^\tsp-\ell^2\nabla\lap\vc u\right)\,,
\end{equation}
where we used the non-dimensional effective thickness $\ell=L/d$.
That choice for $\vt T$ has also the feature of introducing only one new parameter, $\ell$, to which we have already assigned a geometric meaning. It is straightforward to check that $\vt T$, defined as in~\eqref{eq:defT}, enjoys the standard symmetry and frame indifference properties.

Let $C$ be any closed bounded subset of $\R^3$ and $r>0$. The set
\[
V_r(C):=\left\{\vc x\in\R^3 : d(\vc x,C)\leq r \right\}\,,
\]
where $d(\vc x,C)$ denotes the distance of $\vc x$ from $C$, is the $r$-neighborhood of $C$. We define the total hydrodynamic force, due to the fluid velocity and pressure field $(\vc u,p)$, acting on the slender body $\bs$ as
\begin{equation}\label{eq:defforce}
\vc f(\vc u,p):=\lim_{r\to 0}\int_{\de V_r(\bs)}\vt T(\vc u,p)\vc n\,,
\end{equation}
where $\vc n$ denotes the unit outer normal to $\de V_r(\bs)$, and $\vt T$ is the stress tensor defined in~\eqref{eq:defT}. Notice that, thanks to the regularity of $\bs$, there exists always $\bar r>0$ small enough so that $V_r(\bs)$ has a Lipschitz boundary for any $r\leq\bar r$.

Let us check that the limit in~\eqref{eq:defforce} is indeed well-defined. We consider a ball $B_R$ centered at the origin and with very large radius $R$, which contains $\bs$.
According to equation~\eqref{eq:HNS}, the term $\dvg\vt T$ is required to balance both the inertial term and the external forces acting on the system. In the free fall problem considered in this paper, the only external interaction is gravity and, due to the dimensional reduction we perform on the rigid body, it is represented by a measure whose singular part is concentrated on $\bs$. We can see this by noting that the mass density per unit volume must diverge on $\bs$ to give a non-zero weight to a body with vanishing volume.

Hence we expect also $\dvg\vt T$ to be a measure and, denoting by $\alpha$ ($\varsigma$) its absolutely continuous (singular) part with respect to the three-dimensional Lebesgue measure, we expect the support of $\varsigma$ to be contained in $\bs$ and that $\dvg\vt T=\alpha$ in $\R^3\setminus\bs$. Under these assumption we have, for any $r>0$ small enough,
\begin{equation*}
\lim_{r\to 0}\int_{B_R\setminus V_r(\bs)}\dvg\vt T=\lim_{r\to 0}\int_{B_R\setminus V_r(\bs)}\alpha=\int_{B_R\setminus\bs}\alpha=\int_{B_R\setminus\bs}\dvg\vt T\,,
\end{equation*}
by applying Lebesgue's theorem, since $\alpha\in L^1(\R^3)$; then, by the Divergence theorem,
\begin{equation}\label{eq:limf}
\vc f=\lim_{r\to 0}\int_{\de V_r(\bs)}\vt T\vc n=\int_{\de B_R}\vt T\vc n-\int_{B_R\setminus\bs}\dvg\vt T\,,
\end{equation}
where $\vc n$ is always the outer normal. Since the right-hand side of~\eqref{eq:limf} is independent of $r$, the left-hand side is well defined. It is important to stress the fact that, if $\varsigma$ were absent, then the integral over $B_R\setminus\bs$ of $\dvg\vt T$ would be equal to the integral over all of $B_R$ and $\vc f$ would simply vanish. 

In a similar fashion, we define the total hydrodynamic torque acting on $\bs$, due to the fluid velocity and pressure field $(\vc u,p)$, as
\begin{equation}\label{eq:deftorque}
\vc t(\vc u,p):=\lim_{r\to 0}\int_{\de V_r(\bs)}\vc x\times\vt T(\vc u,p)\vc n\,.
\end{equation}


\section{Steady free fall}

In what follows we prove the existence of a solution for the steady version of the differential problem introduced in section~\ref{sec:formulation}:
\begin{align}
&\dvg\vc u=0\,,\label{eq:nl1}\\
&\rey\left\{[(\vc u-\vc U)\cdot\nabla]\vc u+\vc\omega\times\vc u\right\}=\dvg\vt T(\vc u,p)+\vc g\qquad\text{in }\R^3\setminus\bs\,,\label{eq:nl2}\\
&\lim_{|\vc x|\to\infty}\vc u(\vc x)=\vc 0\,,\\
&\vc u(\vc x)=\vc\xi+\vc\omega\times\vc x\qquad\text{on }\bs\,,\label{eq:nl4}\\
&\rey(m\vc\omega\times\vc\xi)=m_e\vc g+\vc f\,,\label{eq:nl5}\\
&\rey[\vc\omega\times(\vt J\vc\omega)]=-m_c(\vc r\times\vc G)+\vc t\,,\label{eq:nl6}\\
&\vc g\times\vc\omega=\vc 0\,.\label{eq:nl7}
\end{align}
We will first treat the low-Reynolds-number version of problem \eqref{eq:nl1}--\eqref{eq:nl7}, thus reducing the nonlinearity to the sole term $\vc g\times\vc\omega$. The linear nature of the flow equations in the low-Reynolds-number limit allows for a finer analysis of the relation between the shape of the objects and their steady motion, which is developed in \cite{GiuMar13}. Here we only present an existence result, studying then the full nonlinear problem, which is the main topic of the present paper. 

\subsection{The low-Reynolds-number limit}

Now we briefly discuss the functional setting. In view of the natural
variational formulation of the problem, we introduce the space
\[
C:=\{\vc u\in\C^\infty_0(\R^3;\R^3):\ \dvg\vc u=0\}
\]
endowed with the norm
\[
\|\vc u\|^2:= \int_{\R^3}\left(2|\sym\nabla\vc
  u|^2+\ell^2|\lap\vc u|^2\right),
\]
where $\sym\nabla\vc u:=(\nabla\vc u+\nabla\vc u^\tsp)/2$.
Denote with $X$ the completion of $C$ in that norm; it is easy to
see that if $\vc u\in X$, then $\nabla\vc u$ belongs to the Sobolev
space $W^{1,2}(\R^3;\R^9)$. Moreover, by Sobolev and Morrey's
Theorems, $X$ embeds in $L^6(\R^3;\R^3)$ and also in a space of
H\"older-continuous functions.  Regarding the pressure $p$ as the
Lagrange multiplier of the constraint $\dvg\vc u=0$, we will take it
in the dual Sobolev space $W^{-1,2}(\R^3)$.

We summarize the problem of the steady free fall of a one-dimensional
body $\bs$ at low Reynolds number, as the following: find $(\vc
u,p)\in X\times W^{-1,2}(\R^3)$ and
$\vc\xi,\vc\omega,\vc g\in\R^3$ with $|\vc g|=1$, such that
\begin{align}
& \nabla p + \lop\vc u = \vc g
\quad\text{on $\R^3\setminus \bs$},\label{eq:st2}\\
&\vc u(\vc x)=\vc\xi+\vc\omega\times\vc x\quad\text{on $\bs$},\label{eq:st3}\\
&m_e\vc g = -\vc f,\label{eq:st4}\\
&m_c\vc r\times\vc g = \vc t,\label{eq:st5}\\
&\vc g\times\vc\omega=\vc 0,\label{eq:st6}
\end{align}
where we set $\lop\vc u:=-\lap\vc u+\ell^2\lap\lap\vc u$. The constraint $\dvg\vc u=0$ is encoded in the definition of the space
$X$, while the strong decay condition~\eqref{eq:decay} is replaced by
an integrability condition for $\vc u$ on the whole $\R^3$. Notice
that, although equation~\eqref{eq:st2} is linear, the full problem is
nonlinear.

To prove the existence of a solution of the steady free fall problem we first
introduce some auxiliary problems, which are well-posed by virtue of
the following result.

\begin{lem}\label{lem:aux}
Given $\vc\xi,\vc\omega\in\R^3$, there exists a unique
solution $(\vc h,p)\in X\times W^{-1,2}(\R^3)$ of the problem
\begin{equation}\label{eq:aux}
\begin{cases}
\nabla p- \Delta \vc h +\ell^2\Delta\Delta \vc h=0 &\text{in
  $\R^3\setminus \bs$,}\\
\vc h=\vc\xi+\vc\omega\times\vc x &\text{on $\bs$.}
\end{cases}
\end{equation}
Moreover $\vc h\in W^{3,q}_{loc}(\R^3;\R^3)$ for any $1<q<\frac 3 2$.
\end{lem}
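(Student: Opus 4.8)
The plan is to recognize \eqref{eq:aux} as the Euler--Lagrange system of a quadratic minimization on the Hilbert space $X$, and then to recover the pressure by a de Rham argument. First I would observe that, after integration by parts on divergence-free fields, the norm of $X$ is induced by the inner product $a(\vc u,\vc v):=\int_{\R^3}\bigl(2\,\sym\nabla\vc u:\sym\nabla\vc v+\ell^2\lap\vc u\cdot\lap\vc v\bigr)$, so that $\|\vc u\|^2=a(\vc u,\vc u)$ and $a(\vc u,\vc\varphi)=\langle\lop\vc u,\vc\varphi\rangle$ whenever $\vc\varphi\in C$. Since $X$ embeds in a space of H\"older-continuous functions, evaluation on the one-dimensional set $\bs$ is a continuous operation, and I would introduce the closed affine subspace $K:=\{\vc v\in X:\vc v=\vc\xi+\vc\omega\times\vc x\text{ on }\bs\}$ together with its underlying closed subspace $X_\bs:=\{\vc\varphi\in X:\vc\varphi=\vc 0\text{ on }\bs\}$.

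The first technical point is that $K\neq\emptyset$. I would build an explicit lift by multiplying the smooth divergence-free rigid field $\vc\xi+\vc\omega\times\vc x$ by a cut-off equal to $1$ near $\bs$ and compactly supported, and then restoring the divergence-free constraint with a Bogovskii corrector supported in the transition annulus, away from $\bs$. The resulting field lies in $C\subset X$ and realizes the prescribed boundary values, so $K$ is a nonempty closed convex set. The unique minimizer $\vc h$ of $\vc v\mapsto\|\vc v\|^2$ over $K$ then exists by the projection theorem, and its Euler--Lagrange condition reads $a(\vc h,\vc\varphi)=0$ for every $\vc\varphi\in X_\bs$. Uniqueness is immediate: the difference of two solutions lies in $X_\bs$ and is $a$-orthogonal to $X_\bs$, hence vanishes.

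To produce the pressure I would restrict the identity $a(\vc h,\cdot)=0$ to divergence-free test fields supported in $\R^3\setminus\bs$: the linear functional $\vc\varphi\mapsto a(\vc h,\vc\varphi)=\langle\lop\vc h,\vc\varphi\rangle$ annihilates all such fields, so by de Rham's theorem it is a gradient, $\lop\vc h=-\nabla p$ in $\R^3\setminus\bs$ for some distribution $p$, and the closed-range (Ne\v{c}as) inequality locates $p$ in $W^{-1,2}(\R^3)$ and fixes it uniquely up to the usual normalization. This establishes the first line of \eqref{eq:aux}, the second being built into $K$.

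Finally, for the regularity claim I would argue by elliptic theory for the fourth-order operator $\lop$. Away from $\bs$ the equation $\nabla p+\lop\vc h=\vc 0$ is uniformly elliptic with smooth coefficients, hence $\vc h$ is smooth there. The behavior near $\bs$ is governed by the Green's tensor of $\lop$: because its symbol grows like $|\xi|^4$, that tensor is \emph{bounded and continuous} at the origin---precisely the regularizing effect of hyperviscosity that renders adherence to a one-dimensional body meaningful---while its third derivatives are of order $|\vc x|^{-2}$. Convolving such a kernel against a source concentrated on the codimension-two set $\bs$ and integrating in the two transverse directions yields $\nabla^3\vc h\in L^q_{loc}$ for $q$ below the stated threshold $\tfrac32$, whence $\vc h\in W^{3,q}_{loc}(\R^3;\R^3)$. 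I expect this last step---quantifying the transverse integrability of the derivatives of $\vc h$ near the one-dimensional singular support---to be the main obstacle, the existence and uniqueness being a routine Hilbert-space projection once the functional framework and the lift are in place.
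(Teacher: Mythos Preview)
Your existence and uniqueness argument is exactly the paper's: minimize the strictly convex functional $\tfrac12\|\cdot\|^2$ over the closed convex set $\{\vc v\in X:\vc v=\vc\xi+\vc\omega\times\vc x\text{ on }\bs\}$, then recover $p$ as the Lagrange multiplier of the divergence-free constraint (your de Rham formulation is just a different name for the same step). You add the explicit Bogovski{\u\i} lift to show non-emptiness, a point the paper leaves implicit.

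For the regularity $\vc h\in W^{3,q}_{loc}$, $1<q<\tfrac32$, you and the paper share the key observation that the adherence condition on $\bs$ is equivalent to a measure-valued source supported on $\bs$, but you then diverge in technique. The paper simply invokes the embedding of Radon measures into $W^{-1,q}_{loc}$ for $q'>3$ and cites the Agmon--Douglis--Nirenberg gain of four derivatives, which immediately yields the stated threshold. Your route is more hands-on: estimate the fundamental tensor of the fourth-order operator (bounded near the origin, third derivatives $\sim|\vc x|^{-2}$) and integrate transversally to the one-dimensional support. This is correct and in fact slightly more informative---it makes the geometric role of the codimension visible---but it requires tracking the full Stokes-type Green's tensor (not just that of the scalar $\lop$) and controlling the density of the source measure, whereas the paper's abstract embedding argument sidesteps both issues in one line.
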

\begin{proof}
  Since $X$ embeds in a space of H\"older-continuous functions, the
  subset
\[
\{\vc v\in X:\ \vc v=\vc\xi+\vc\omega\times\vc x\text{ on }\bs\}
\]
is well-defined, closed and convex. The velocity field $\vc h$ can be
found by minimizing on that set the functional
\[
\mathcal F(\vc v):=\frac{1}{2}\|\vc v\|^2=\frac 1 2 \int_{\R^3}\left(2|\sym\nabla\vc
  v|^2+\ell^2|\lap\vc v|^2\right).
\]
Being $\mathcal F$ a strictly convex functional, $\vc h$ is
unique. Then, the pressure field $p$ can be recovered as the Lagrange
multiplier of the divergence-free constraint.

Since the adherence condition on $\bs$ can be replaced by a non
homogeneous right-hand side which is a measure supported on $\bs$,
that is
\[
\nabla p- \Delta \vc h +\ell^2\Delta\Delta \vc h=\vc \eta \quad\text{in $\R^3$,}\\
\]
where $\vc\eta$ is a (vector-valued) Radon measure which vanishes
outside $\bs$, we get for $\vc h$ a fourth-order linear elliptic
equation with a measure-valued datum. Since the space of
Radon measures embeds in $W^{-1,q}_{loc}(\R^3;\R^3)$ for
every $1<q<\frac 3 2$ (here $\frac 3 2$ is such that $q'>n$ in the
case $n=3$), then a standard regularity gain of the solution
\cite[Theorem $15.3'$]{AgmDou59} entails $\vc h\in
W^{3,q}_{loc}(\R^3;\R^3)$.
\end{proof}

\begin{rem}\label{rem:nonex}
In view of the proof of Lemma~\ref{lem:aux} it is clear that, for $\vc h$ to truly satisfy the adherence condition on $\bs$, the latter must have \emph{positive} $H^2$-capacity. That property is granted by the embedding of $X$ into a space fo H\"older continuous functions.
On the contrary, when treating the classical Stokes equation, it is natural to require only that the gradient of the velocity field is in $L^2(\R^3;\R^9)$. Now, the $H^1$-capacity of $\bs$ vanishes, allowing us to arbitrarily change the value of the velocity field on $\bs$, so that the imposition of an adherence condition does not really affect the flow.
Indeed it is immediate to see that, given $\vc\xi\in\R^3$ and $\vc\omega\in\R^3$, the solution $\vc u\in H_{loc}^1(\R^3;\R^3)$ of the problem
\begin{equation*}
\left\{\begin{aligned}
&\dvg\vc u=0 &\text{in }\R^3\,,\\
&\nabla p-\mu\lap\vc u=0 &\text{in }\R^3\,,\\
&\vc u=\vc\xi+\vc\omega\times\vc x &\text{on }\bs\,\,,
\end{aligned}\right.
\end{equation*}
is such that
\[
\int_{\R^3}\vert\vc u\vert^2=0\,,
\]
and hence $\vc u=0$ almost everywhere in $\R^3$.
\end{rem}

Consider now the solutions $(\vc h^{(i)},p^{(i)})$ and $(\vc
H^{(i)},P^{(i)})$ ($i=1,2,3$) in the space $X\times W^{-1,2}(\R^3)$ of
the auxiliary problems
\begin{equation}
\label{eq:aux1}
\begin{cases}
\nabla p^{(i)}- \Delta \vc h^{(i)} +\ell^2\Delta\Delta\vc h^{(i)}=0 &\text{in }\R^3\setminus\bs,\\
\vc h^{(i)}=\vc e_i &\text{on }\bs,
\end{cases}
\end{equation}
and
\begin{equation}
\label{eq:aux2}
\begin{cases}
\nabla P^{(i)}- \Delta \vc H^{(i)} +\ell^2\Delta\Delta \vc H^{(i)}=0 &\text{in }\R^3\setminus\bs,\\
\vc H^{(i)}=\vc e_i\times\vc x &\text{on }\bs.
\end{cases}
\end{equation}
We will show that the combinations
\begin{equation}\label{eq:combination}
\vc u=\sum_{i=1}^3 [\xi_i\vc h^{(i)}+\omega_i\vc H^{(i)}],\quad 
p=\sum_{i=1}^3 [\xi_i p^{(i)}+\omega_i P^{(i)}]+\vc g\cdot\vc x,
\end{equation}
for a suitable choice of the vectors $\vc\xi$ and $\vc\omega$, solve
the steady free fall problem.

\begin{thm}
The differential problem \eqref{eq:st2}--\eqref{eq:st6} admits a solution $(\vc u,p,\vc\xi,\vc\omega,\vc g)$.
\end{thm}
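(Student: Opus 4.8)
The plan is to use the linearity of the operator $\lop$ to collapse the whole problem onto a finite-dimensional algebraic statement about the vectors $\vc\xi,\vc\omega,\vc g\in\R^3$. First I would check that, thanks to Lemma~\ref{lem:aux} and the superposition~\eqref{eq:combination}, the pair $(\vc u,p)$ satisfies the flow equation~\eqref{eq:st2} and the adherence condition~\eqref{eq:st3} for \emph{every} choice of $\vc\xi,\vc\omega,\vc g$: by~\eqref{eq:aux1}--\eqref{eq:aux2} the fields $(\vc h^{(i)},p^{(i)})$ and $(\vc H^{(i)},P^{(i)})$ make $\nabla p+\lop\vc u$ vanish on $\R^3\setminus\bs$, the added pressure term contributes $\nabla(\vc g\cdot\vc x)=\vc g$ to the right-hand side, and on $\bs$ the traces sum to $\vc\xi+\vc\omega\times\vc x$. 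Hence only~\eqref{eq:st4}--\eqref{eq:st6} together with $|\vc g|=1$ remain to be arranged.

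I would then analyse how the force $\vc f$ and torque $\vc t$ depend on the data. As $\vt T$ in~\eqref{eq:defT} is linear in $(\vc u,p)$, definitions~\eqref{eq:defforce} and~\eqref{eq:deftorque} make $\vc f,\vc t$ linear in $(\vc\xi,\vc\omega,\vc g)$. Next I would show that the $\vc g\cdot\vc x$ part of the pressure contributes nothing: its stress is $-(\vc g\cdot\vc x)\vt 1$, and the divergence theorem rewrites the associated integrals over $\de V_r(\bs)$ as volume integrals over $V_r(\bs)$, which tend to zero as $r\to 0$ because $\bs$ has zero Lebesgue measure. Thus $\vc f$ and $\vc t$ depend linearly only on $(\vc\xi,\vc\omega)$, and I would record this through a $6\times6$ resistance matrix $\vb K$ with $(-\vc f,-\vc t)=\vb K(\vc\xi,\vc\omega)$, both sides viewed in $\R^6$.

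The core of the proof, and the step I expect to be the main obstacle, is to show that $\vb K$ is symmetric and positive definite, hence invertible. The tool is the energy identity $\vc f\cdot\vc\xi'+\vc t\cdot\vc\omega'=-a(\vc u,\vc u')$, where $a(\vc u,\vc v):=\int_{\R^3}(2\,\sym\nabla\vc u:\sym\nabla\vc v+\ell^2\lap\vc u\cdot\lap\vc v)$ polarizes $\|\cdot\|^2$, $\vc u$ is the flow of $(\vc\xi,\vc\omega)$, and $\vc u'$ that of a second rigid motion $(\vc\xi',\vc\omega')$. To obtain it I would rewrite $\vc f\cdot\vc\xi'+\vc t\cdot\vc\omega'=\lim_{r\to0}\int_{\de V_r(\bs)}(\vt T\vc n)\cdot(\vc\xi'+\vc\omega'\times\vc x)$ via the scalar triple product in~\eqref{eq:deftorque}, then integrate $\vt T(\vc u,p):\nabla\vc u'$ by parts on $\R^3\setminus\bs$: the term $-p\,\vt 1$ drops since $\dvg\vc u'=0$, the Newtonian part yields $2\,\sym\nabla\vc u:\sym\nabla\vc u'$, and the third-order term $-\ell^2\nabla\lap\vc u$ yields, after one further integration by parts, $\ell^2\lap\vc u\cdot\lap\vc u'$, reconstructing $a(\vc u,\vc u')$. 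Equivalently, writing $\dvg\vt T$ as the reaction measure $\vc\eta$ supported on $\bs$ from Lemma~\ref{lem:aux}, one reads $a(\vc u,\vc u')=-\langle\vc\eta,\vc\xi'+\vc\omega'\times\vc x\rangle$. The delicate points are the vanishing of the boundary integral at infinity, for which I would invoke the decay from $X\hookrightarrow L^6(\R^3;\R^3)$ and the $W^{1,2}$-regularity of $\nabla\vc u$, and the control of the terms near $\bs$ through $\vc u\in W^{3,q}_{loc}$. Symmetry of $\vb K$ follows at once from $a(\vc u,\vc u')=a(\vc u',\vc u)$, and positive definiteness from $\vb K(\vc\xi,\vc\omega)\cdot(\vc\xi,\vc\omega)=\|\vc u\|^2\geq0$, which vanishes only for $\vc u\equiv0$; since $\bs$ has positive $H^2$-capacity (Remark~\ref{rem:nonex}) this forces $\vc\xi+\vc\omega\times\vc x=\vc0$ on $\bs$, and checking that the geometry of the admissible $\bs$ then makes $\vc\xi=\vc\omega=\vc0$ is the point requiring the most care.

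Finally, with $\vb K$ invertible I would solve~\eqref{eq:st4}--\eqref{eq:st5} for the velocities as a linear function of $\vc g$, namely $(\vc\xi,\vc\omega)=\vb K^{-1}(m_e\vc g,\,-m_c\vc r\times\vc g)$, and extract the angular part as $\vc\omega=\vb M\vc g$ with $\vb M$ a real $3\times3$ matrix. The last condition~\eqref{eq:st6}, $\vc g\times\vc\omega=\vc g\times(\vb M\vc g)=\vc0$, then states exactly that $\vc g$ is a real eigenvector of $\vb M$. Since every real $3\times3$ matrix has a real eigenvalue, such a $\vc g\neq\vc0$ exists; normalizing it to $|\vc g|=1$ and defining $(\vc\xi,\vc\omega)$ and $(\vc u,p)$ through~\eqref{eq:combination} accordingly produces a solution of~\eqref{eq:st2}--\eqref{eq:st6}, which completes the proof.
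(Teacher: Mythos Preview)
Your proposal is correct and follows essentially the same route as the paper: reduce via~\eqref{eq:combination} to a $6\times6$ resistance matrix, argue its positive definiteness through the dissipation identity, and then solve~\eqref{eq:st4}--\eqref{eq:st6} by finding a real eigenvector of a real $3\times3$ matrix. The only difference is cosmetic ordering---the paper imposes $\vc\omega=\lambda\vc g$ from~\eqref{eq:st6} \emph{before} inverting, while you invert first and impose~\eqref{eq:st6} last---and you spell out details (the vanishing contribution of the pressure term $\vc g\cdot\vc x$, the energy computation, the possible degeneracy for straight segments) that the paper simply asserts via thermodynamic consistency.
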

\begin{proof}
It is straightforward to check that the fields $\vc u$ and $p$ defined by~\eqref{eq:combination} satisfy equations \eqref{eq:st2} and \eqref{eq:st3}. Equation \eqref{eq:st6} implies that $\vc\omega=\lambda\vc g$ for some $\lambda\in\R$, and equations \eqref{eq:st4} and \eqref{eq:st5} reduce to the following algebraic system in the six scalar unknowns $\vc\xi$, $\lambda$, and $\vc g$:
\begin{equation}\label{eq:algsys}
\begin{aligned}
\vt K\vc\xi+\lambda\vt S\vc g&\mbox{}=m_e\vc g\,,\\
\vt C\vc\xi+\lambda\vt B\vc g&\mbox{}=-m_c(\vc r\times\vc g)\,,
\end{aligned}
\end{equation}
where the matrices $\vt K$, $\vt S$, $\vt C$, and $\vt B$ are defined by
\[
\vt K_{ji}:=-\lim_{r\to 0}\int_{\de V_r(\bs)}\vt T(\vc h^{(i)},p^{(i)})\vc n\cdot\vc e_j\,,
\]
\[
\vt S_{ji}:=-\lim_{r\to 0}\int_{\de V_r(\bs)}\vt T(\vc H^{(j)},P^{(j)})\vc n\cdot\vc e_i\,,
\]
\[
\vt C_{ji}:=-\lim_{r\to 0}\int_{\de V_r(\bs)}\vc x\times\vt T(\vc h^{(j)},p^{(j)})\vc n\cdot\vc e_i\,,
\]
\[
\vt B_{ji}:=-\lim_{r\to 0}\int_{\de V_r(\bs)}\vc x\times\vt T(\vc H^{(i)},P^{(i)})\vc n\cdot\vc e_j\,.
\]

It is now clear that the steady free fall problem admits a solution if and only if the algebraic system~\eqref{eq:algsys} admits a solution, and the latter fact is related to the properties of the matrix
\[
\vt A:=
\begin{pmatrix}
\vt K & \vt S \\
\vt C & \vt B
\end{pmatrix}\,.
\]
By the linearity of~\eqref{eq:aux}, the instantaneous energy dissipation rate $E$ of the flow which takes the value $\vc\xi+\vc\omega\times\vc x$ on $\bs$ is given by (see also \cite{HapBre65})
\[
E=
\begin{pmatrix}
\vt K & \vt S \\
\vt C & \vt B
\end{pmatrix}
\begin{pmatrix}
\vc\xi \\ \vc\omega 
\end{pmatrix}\cdot
\begin{pmatrix}
\vc\xi \\ \vc\omega 
\end{pmatrix}\,.
\]
Now, the constitutive prescription for the Cauchy stress tensor has been chosen to be thermodynamically consistent, thus implying that $E$ be positive for any non-vanishing flow; hence, the matrices $\vt A$, $\vt K$, and $\vt B$ turn out to be positive definite and, in particular, invertible.

We can now transform~\eqref{eq:algsys} into
\begin{gather}\label{eq:finsys}
\vc\xi\mbox{}=\vt K^{-1}(m_e\vc g-\lambda\vt S\vc g)\,,\\
\vt F\vc g:=(\vt C\vt K^{-1}\vt S-\vt B)^{-1}(m_e\vt C\vt K^{-1}\vc g+m_c\vc r\times\vc g)\mbox{}=\lambda\vc g\,.
\end{gather}
Since $\vt A$ is non-singular, the matrix $\vt F$ is a well-defined $3\times 3$ real matrix and it has at least one real eigenvalue. Such an eigenvalue $\lambda$, the associated unit eigenvector $\vc g$ and $\vc\xi$ calculated as in \eqref{eq:finsys}, together with the fields $\vc u$ and $p$ introduced in~\eqref{eq:combination}, furnish a solution for equations \eqref{eq:st2}--\eqref{eq:st6}.
\end{proof}

\subsection{Nonlinear free fall}

To solve the differential problem \eqref{eq:nl1}--\eqref{eq:nl7} we employ a weak formulation. We first consider the space
\begin{equation*}
V:=\left\{\vc v\in C_0^\infty(\R^3;\R^3):\dvg\vc v=0\text{ and }\vc v(\vc x)=\vc\eta+\vc\psi\times\vc x\text{ in }V_r(\bs),\;\exists\;r>0  \right\}\,,
\end{equation*}
where $\vc\eta,\vc\psi\in\R^3$ represent the virtual vectors associated with $\vc\xi$ and $\vc\omega$, respectively. Now, denoting by $\norm{\cdot}_2$ the norm in $L^2(\R^3;\R^3)$, the space $Y$ is the completion of $V$ with respect to the norm
\begin{equation*}
\norm{\vc v}^2_Y:=2\norm{\sym\nabla\vc v}_2^2+\ell^2\norm{\lap\vc v}^2_2\,.
\end{equation*}
It is easy to show that, for any $\vc v\in Y$, we have $\int_{\R^3}\vc v=0$. Indeed, take any $\vc a\in\R^3$ and any $\vc v\in V$: in particular $\vc v$ has compact support, and there exists a ball $B_R$ of radius $R$ large enough to ensure that
\begin{equation*}
\vc a\cdot\int_{\R^3}\vc v=\int_{B_R}\vc a\cdot\vc v=\int_{B_R}\dvg[(\vc a\cdot\vc x)\vc v]=\int_{\de B_R}(\vc a\cdot\vc x)(\vc v\cdot\vc n)=0\,.
\end{equation*}
The result can be extended to all of $Y$ by a density argument.

We now give a formal derivation of the weak form of our differential problem.
We multiply \eqref{eq:nl2} by $\vc v\in V$ and integrate over $\R^3\setminus V_r(\bs)$ with $r>0$ sufficiently small, obtaining
\begin{multline}\label{eq:mult}
\rey\int_{\R^3\setminus V_r(\bs)}\left\{[(\vc u-\vc U)\cdot\nabla]\vc u+\vc\omega\times\vc u\right\}\cdot\vc v=
\int_{\R^3\setminus V_r(\bs)}\dvg\vt T(\vc u,p)\cdot\vc v+\int_{\R^3\setminus V_r(\bs)}\vc g\cdot\vc v\\
=-2\int_{\R^3\setminus V_r(\bs)}\sym\nabla\vc u\cdot\nabla\vc v-\ell^2\int_{\R^3\setminus V_r(\bs)}\lap\vc u\cdot\lap\vc v\\
-\int_{\de V_r(\bs)}\vt T(\vc u,p)\vc n\cdot\vc v-\ell^2\int_{\de V_r(\bs)}\lap\vc u\otimes\vc n\cdot\nabla\vc v+\vc g\cdot\int_{\R^3\setminus V_r(\bs)}\vc v\,.
\end{multline}
We have
\begin{equation*}
\int_{\de V_r(\bs)}\lap\vc u\otimes\vc n\cdot\nabla\vc v=\int_{V_r(\bs)}\dvg(\nabla\vc v^\tsp\lap\vc u)=\int_{V_r(\bs)}\nabla\lap\vc u\cdot\nabla\vc v\,,
\end{equation*}
and, letting $r\to 0$,
\begin{multline*}
\lim_{r\to 0}\int_{\de V_r(\bs)}\vt T(\vc u,p)\vc n\cdot\vc v
=\lim_{r\to 0}\left(\vc\eta\cdot\int_{\de V_r(\bs)}\vt T(\vc u,p)\vc n+\vc\psi\cdot\int_{\de V_r(\bs)}\vc x\times\vt T(\vc u,p)\vc n\right)\\=\vc\eta\cdot\vc f+\vc\psi\cdot\vc t\,.
\end{multline*}
Since we assume all the needed regularity on $\vc u$, we have $\nabla\lap\vc u\cdot\nabla\vc v\in L^1(\R^3;\R)$. The latter property guarantees that, by taking the limit $r\to 0$ in \eqref{eq:mult}, we obtain
\begin{multline*}
\rey\int_{\R^3}\left\{[(\vc u-\vc U)\cdot\nabla]\vc u+\vc\omega\times\vc u\right\}\cdot\vc v=\\
=-2\int_{\R^3}\sym\nabla\vc u\cdot\nabla\vc v-\ell^2\int_{\R^3}\lap\vc u\cdot\lap\vc v-\vc\eta\cdot\vc f-\vc\psi\cdot\vc t\,;
\end{multline*}
by substituting equations \eqref{eq:nl5}--\eqref{eq:nl6} into the preceding expression, we can write
\begin{multline}\label{eq:varform}
\rey\int_{\R^3}\left\{[(\vc u-\vc U)\cdot\nabla]\vc u+\vc\omega\times\vc u\right\}\cdot\vc v+\rey(m\vc\omega\times\vc\xi)\cdot\vc\eta+\rey[\vc\omega\times(\vt J\vc\omega)]\cdot\vc\psi=\\
=-2\int_{\R^3}\sym\nabla\vc u\cdot\nabla\vc v-\ell^2\int_{\R^3}\lap\vc u\cdot\lap\vc v+m_e\vc g\cdot\vc\eta-m_c(\vc r\times\vc g)\cdot\vc\psi\,,
\end{multline}
for any $\vc v\in V$, which represents the variational form of equations \eqref{eq:nl1}--\eqref{eq:nl6}.

While proving our main result we will make use of the following lemmas. The third one is proven in~\cite[Lemme 4.4]{Ser87}.

\begin{lem}\label{lem:rigid}
There exists a constant $K>0$, depending only on $\bs$, such that, for any $\vc v\in V$,
\begin{equation}
\label{eq:bound}
|\vc\eta|+|\vc\eta+\vc\psi\times\vc r|\leq K\norm{\vc v}_Y\,,  
\end{equation}
where $\vc\eta+\vc\psi\times\vc x$ is the value taken by $\vc v$ in a neighborhood of $\bs$.
\end{lem}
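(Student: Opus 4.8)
The plan is to split the statement into a purely analytic ingredient and a finite-dimensional geometric one. The analytic input is that the energy norm controls the trace of $\vc v$ on $\bs$: by the same Sobolev--Morrey argument invoked for $X$ (note that $2\norm{\sym\nabla\vc v}_2^2=\norm{\nabla\vc v}_2^2$ for divergence-free fields, so the first term of the norm already pins down $\nabla\vc v$ in $L^2$), the space $Y$ embeds into H\"older-continuous functions decaying at infinity; hence point evaluation is a bounded functional and, $\bs$ being compact, $\sup_{\bs}|\vc v|\le C\norm{\vc v}_Y$ with $C$ depending only on $\bs$. Since every $\vc v\in V$ coincides with its rigid field $\vc\eta+\vc\psi\times\vc x$ on a full neighborhood of $\bs$, and in particular on $\bs$, this at once bounds the second term of \eqref{eq:bound}: writing the centroid as $\vc r=\ha^1(\bs)^{-1}\int_{\bs}\vc x\dha^1$ gives
\[
\vc\eta+\vc\psi\times\vc r=\frac{1}{\ha^1(\bs)}\int_{\bs}(\vc\eta+\vc\psi\times\vc x)\dha^1=\frac{1}{\ha^1(\bs)}\int_{\bs}\vc v\dha^1,
\]
so that $|\vc\eta+\vc\psi\times\vc r|\le\sup_{\bs}|\vc v|\le C\norm{\vc v}_Y$.

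It remains to control $|\vc\eta|$, and I would reduce this to a finite-dimensional inequality between two seminorms on $\R^3\times\R^3$. Note first that $(\vc\eta,\vc\psi)$ is determined unambiguously by $\vc v$, since a rigid field is fixed by its values on the open set where $\vc v$ is rigid. Set $P(\vc\eta,\vc\psi):=\sup_{\vc x\in\bs}|\vc\eta+\vc\psi\times\vc x|$ and $N(\vc\eta,\vc\psi):=|\vc\eta|+|\vc\eta+\vc\psi\times\vc r|$. Because $P(\vc\eta,\vc\psi)=\sup_{\bs}|\vc v|\le C\norm{\vc v}_Y$, the estimate \eqref{eq:bound} follows once one proves the geometric inequality $N\le C_{\bs}\,P$. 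Since $\R^6$ is finite-dimensional, such an inequality holds as soon as $\ker P\subseteq\ker N$: then $N$ descends to the quotient $\R^6/\ker P$, on which $P$ is a norm, and all norms on a finite-dimensional space are equivalent (equivalently, one argues by contradiction on the unit sphere, where $P$ attains a positive minimum away from its kernel).

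The heart of the matter, and the only place where the shape of $\bs$ enters, is the verification that $\ker P\subseteq\ker N$; this is where I expect the real difficulty to lie. An element $(\vc\eta,\vc\psi)\in\ker P$ is a rigid field vanishing identically on $\bs$. If $\bs$ is not contained in a line it carries three non-collinear points, and a rigid field vanishing there vanishes identically (the two independent difference vectors force $\vc\psi=\vc0$, hence $\vc\eta=\vc0$), so $\ker P=\{\vc0\}$ and the inclusion is trivial. The delicate, genuinely degenerate situation is when $\bs$ reduces to a straight segment: then the center of mass, which is the origin of the co-moving frame, lies on that segment, so $\bs$ sits on a line through the origin with some direction $\vc e$; the condition $\vc\eta+\vc\psi\times\vc x=\vc0$ along this line forces $\vc\eta=\vc0$ and $\vc\psi\parallel\vc e$, and since the centroid $\vc r$ is again a point of the segment one gets $\vc\psi\times\vc r=\vc0$, whence $N(\vc\eta,\vc\psi)=0$. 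Thus $\ker P\subseteq\ker N$ in every case. This straight-rod configuration is the crux: it is precisely where the rotational parameter $\vc\psi$ becomes invisible both to the trace on $\bs$ and to the two combinations appearing in \eqref{eq:bound}, and the inequality survives only because the left-hand side was designed so that the uncontrolled component of $\vc\psi$ contributes nothing to it.
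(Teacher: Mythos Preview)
Your proof is correct, and its first half coincides with the paper's: both invoke the embedding $\sup_{\bs}|\vc v|\le K_1\norm{\vc v}_Y$ and obtain $|\vc\eta+\vc\psi\times\vc r|\le K_1\norm{\vc v}_Y$ by averaging the rigid field over $\bs$. The difference lies entirely in the bound on $|\vc\eta|$. The paper dispatches this in one line by observing that, since the origin of the co-moving frame is the \emph{center of mass} of $\bs$, the vector $\vc\eta$ (the rigid velocity at $\vc x=\vc 0$) is a mass-weighted average of the velocities $\vc\eta+\vc\psi\times\vc x$ over $\bs$ and hence lies in their convex hull; thus $|\vc\eta|\le\sup_{\vc x\in\bs}|\vc\eta+\vc\psi\times\vc x|\le\norm{\vc v}_\infty$, giving the explicit constant $K=2K_1$ with no case analysis. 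Your finite-dimensional seminorm argument on $\R^6$ reaches the same conclusion by a more abstract route; it costs you the explicit constant and the geometric case split, but it has the merit of isolating precisely why the straight-segment configuration is harmless---namely, that the uncontrolled spin component contributes neither to $P$ nor to $N$---and of explaining why the particular combination on the left of \eqref{eq:bound} is the one that can be bounded.
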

\begin{proof}
Let us first observe that Korn's inequality and standard embedding theorems provide a constant $K_1>0$ such that $\norm{\vc v}_{\infty}\leq K_1\norm{\vc v}_Y$ for any $\vc v\in V$. We then have
\begin{equation*}
\mathcal H^1(\bs)|\vc\eta+\vc\psi\times\vc r|=\left\vert\int_\bs \vc\eta+\vc\psi\times\vc x\,d\mathcal H^1(\vc x) \right\vert\leq\mathcal H^1(\bs) K_1\norm{\vc v}_Y\,,
\end{equation*}
and it is now enough to show that $|\vc\eta|\leq\norm{\vc v}_\infty$ to prove the assertion with $K=2K_1$. 

By noting that the velocity of the center of mass belongs to the convex hull in $\R^3$ of the set of velocities on $\bs$, we conclude that
\begin{equation}
\label{eq:speed}
|\vc\eta|\leq\sup_{\vc x\in\bs}|\vc\eta+\vc\psi\times\vc x|\leq\norm{\vc v}_\infty\,,
\end{equation}
and the proof is complete.
\end{proof}

\begin{lem}\label{lem:basis}
There exists $\{\vc\phi_i\}_{i\in\N}\subset V$ whose linear hull is dense in $Y$, and such that, for any $i,k\in\N$,
\begin{equation*}
2\int_{\R^3}\sym\nabla\vc\phi_i\cdot\nabla\vc\phi_k+\ell^2\int_{\R^3}\lap\vc\phi_i\cdot\lap\vc\phi_k=\delta_{ik}\,,
\end{equation*}
and, given $\vc v\in V$ and $\varepsilon>0$, there exist $m\in\N$ and $b_1,\ldots,b_m\in\R$ such that
\begin{equation}
\label{eq:approx}
\norm{\vc v-\sum_{i=1}^mb_i\vc\phi_i}_{C^2(\R^3;\R^3)}<\varepsilon\,.
\end{equation}
\end{lem}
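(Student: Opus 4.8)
The plan is to produce a countable family $\{\vc v_j\}_{j\in\N}\subset V$ whose linear span is simultaneously dense in $Y$ and rich enough to approximate every element of $V$ in the $C^2$ norm, and then to orthonormalize it by the Gram--Schmidt procedure with respect to the bilinear form
\[
\langle\vc u,\vc v\rangle:=2\int_{\R^3}\sym\nabla\vc u\cdot\nabla\vc v+\ell^2\int_{\R^3}\lap\vc u\cdot\lap\vc v .
\]
Since $\sym\nabla\vc u$ is symmetric, $\sym\nabla\vc u\cdot\nabla\vc v=\sym\nabla\vc u\cdot\sym\nabla\vc v$, so this is exactly the inner product polarizing $\norm{\cdot}_Y$. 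First I would verify that $\langle\cdot,\cdot\rangle$ is positive definite on $V$: if $\norm{\vc v}_Y=0$ then $\sym\nabla\vc v=0$, whence $\vc v$ is an infinitesimal rigid field $\vc a+\vc b\times\vc x$, which, having compact support, must vanish. Thus Gram--Schmidt is well defined.

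For the generating family I would exhaust $V$ by the subspaces $V_N:=\{\vc v\in V:\operatorname{supp}\vc v\subset B_N\}$, $N\in\N$. Each $V_N$ is a subset of the separable Banach space $C^2(\overline{B_N};\R^3)$, and a subset of a separable metric space is itself separable; hence $V_N$ admits a countable subset $D_N$ that is $C^2$-dense in $V_N$. Crucially, because $D_N$ is extracted from $V_N$ as an abstract metric subspace, its elements are honest members of $V$, automatically divergence-free and rigid near $\bs$. Setting $D:=\bigcup_N D_N$ gives a countable subset of $V$. The decisive observation is that, on a fixed ball, the $C^2$ norm dominates $\norm{\cdot}_Y$: for fields supported in $B_N$ one has $\norm{\vc v}_Y\leq C_N\norm{\vc v}_{C^2(\overline{B_N})}$, since both $\sym\nabla\vc v$ and $\lap\vc v$ are pointwise bounded by $\norm{\vc v}_{C^2}$ on the bounded set $B_N$. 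Consequently $D$ approximates every $\vc v\in V$ both in $C^2$ (by a single element of the appropriate $D_N$, whose support lies in $B_N$, so the $C^2(\R^3)$ and $C^2(\overline{B_N})$ norms agree) and in $\norm{\cdot}_Y$; as $V$ is dense in $Y$ by definition, the span of $D$ is dense in $Y$.

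It remains to orthonormalize. I would enumerate $D=\{\vc v_j\}_{j\in\N}$ and run Gram--Schmidt in $\langle\cdot,\cdot\rangle$, discarding any vector lying in the span of its predecessors; this produces an orthonormal sequence $\{\vc\phi_i\}_{i\in\N}\subset V$, infinite because $V$ (and hence $Y$) is infinite dimensional, satisfying $2\int_{\R^3}\sym\nabla\vc\phi_i\cdot\nabla\vc\phi_k+\ell^2\int_{\R^3}\lap\vc\phi_i\cdot\lap\vc\phi_k=\delta_{ik}$. Since Gram--Schmidt preserves finite spans, $\operatorname{span}\{\vc\phi_i\}=\operatorname{span}D$, so both the $Y$-density and the $C^2$-approximation property \eqref{eq:approx} transfer verbatim from $D$ to $\{\vc\phi_i\}$.

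I expect the only delicate point to be the separability/approximation step: one must arrange for a \emph{single} countable set to serve both the weaker $Y$ topology needed for density in $Y$ and the stronger $C^2$ topology demanded by \eqref{eq:approx}. The resolution is precisely the remark that on each fixed ball the $C^2$ norm controls $\norm{\cdot}_Y$, so a $C^2$-dense family automatically generates a $Y$-dense span; the positive definiteness and the Gram--Schmidt orthonormalization are then entirely routine.
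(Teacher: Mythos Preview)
Your argument is correct and follows the same overall architecture as the paper's proof: extract a countable subset of $V$ that is dense in a topology strong enough to control $C^2$, then apply Gram--Schmidt in the $Y$-inner product, noting that Gram--Schmidt preserves finite spans so both the $Y$-density and the $C^2$-approximation survive. The difference lies in how that stronger topology is realized. The paper introduces an auxiliary Hilbert space $W$, the completion of $V$ under $\norm{\vc v}_W:=\norm{\vc v}_Y+\sum_{|\alpha|=3}^{4}\norm{D^\alpha\vc v}_2$, takes a countable $W$-dense subset of $V$, and then invokes Korn's inequality together with Sobolev--Morrey embedding (fourth-order $L^2$ control in three dimensions yields $C^2$ control) to obtain \eqref{eq:approx}. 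You instead exhaust $V$ by the compactly supported layers $V_N$, use the direct pointwise bound $\norm{\cdot}_Y\leq C_N\norm{\cdot}_{C^2(\overline{B_N})}$ on each layer, and extract $C^2$-dense countable subsets there. Your route is more elementary in that it avoids the auxiliary Sobolev space and the embedding theorem altogether; the paper's route is slightly more streamlined in that a single global norm does all the work without the exhaustion by balls. Both are entirely sound.
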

\begin{proof}
Let $W$ be the completion of $V$ in the norm
\begin{equation*}
\norm{\vc v}_W:=\norm{\vc v}_Y+\sum_{|\alpha|=3}^4\norm{D^\alpha\vc v}_2\,,
\end{equation*}
where $|\alpha|$ denotes the length of the multindex $\alpha$. Clearly, $W$ is a separable subspace of $Y$, and $V$ is a separable subspace of $W$, hence there exists a $\{\vc\varphi_i\}_{i\in\N}\subset V$ dense in $W$. Since we are dealing with separable Hilbert spaces, we can obtain from $\{\vc\varphi_i\}_{i\in\N}$ a set $\{\vc\phi_i\}_{i\in\N}$ which is dense in $Y$ and orthonormal with respect to the scalar product on $Y$ which induces $\norm{\cdot}_Y$.
The approximation property \eqref{eq:approx} descends now from Korn's inequality and standard embedding theorems.
\end{proof}

\begin{lem}\label{lem:top}
Fix $R>0$ and let $B_R$ denote the ball in $\R^n$ of radius $R$ and center $\vc 0$. Let
\begin{equation*}
\begin{aligned}
&\mathcal T:\overline{B_R}\times S^2\to\R^n\,,\\
&\vc\tau:\overline{B_R}\times S^2\to\R^3
\end{aligned}
\end{equation*}
be continuous maps. If
\begin{equation*}
\forall\,(\vc c,\vc g)\in\de B_R\times S^2:\mathcal T(\vc c,\vc g)\cdot\vc c>0\,,\;
\text{ and}\quad\forall\,(\vc c,\vc g)\in B_R\times S^2:\vc\tau(\vc c,\vc g)\cdot\vc g=0\,,
\end{equation*}
then there exists $(\bar{\vc c},\bar{\vc g})\in B_R\times S^2$ such that
\begin{equation*}
\mathcal T(\bar{\vc c},\bar{\vc g})=\vc 0\quad\text{and}\quad\vc\tau(\bar{\vc c},\bar{\vc g})=\vc 0\,.
\end{equation*}
\end{lem}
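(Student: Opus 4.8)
My plan is to read the statement as a common-zero (coincidence) problem on the product $M:=\overline{B_R}\times S^2$, controlled by two independent pieces of topological data. The outward condition $\mathcal T(\vc c,\vc g)\cdot\vc c>0$ on $\de B_R\times S^2$ is a Poincar\'e--Bohl condition: it forces $\mathcal T(\cdot,\vc g)$ to be homotopic, on the boundary sphere, to the identity, so $\mathcal T(\cdot,\vc g)$ carries Brouwer degree $1$ in the $\vc c$ variable. The tangency condition $\vc\tau(\vc c,\vc g)\cdot\vc g=0$ makes $\vc\tau(\vc c,\cdot)$ a tangent vector field on $S^2$, whose total index is pinned by Poincar\'e--Hopf (the hairy-ball theorem) to be $\chi(S^2)=2\neq 0$. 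The heart of the proof is to fuse these into a single nonzero count.

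Concretely, I would regard $M$ as an $(n+2)$-manifold with boundary $\de M=\de B_R\times S^2$ and introduce the rank-$(n+2)$ bundle $E:=\underline{\R^n}\oplus p^{*}TS^2$, where $p:M\to S^2$ is the projection. Then $s:=(\mathcal T,\vc\tau)$ is a genuine section of $E$, precisely because $\vc\tau$ lands in $T_{\vc g}S^2$, and a common zero of $\mathcal T$ and $\vc\tau$ is exactly a zero of $s$. On $\de M$ the section $s$ is nonvanishing, since there $\mathcal T\cdot\vc c>0$ already forces $\mathcal T\neq\vc 0$; hence the relative Euler number of $s$ (the obstruction to extending its nonvanishing boundary data to a nonvanishing section over all of $M$) is well defined, and it suffices to show it is nonzero.

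To compute it I would deform the boundary data to a product model. The radial homotopy $(1-t)\mathcal T+t\,\vc c$ stays nonzero on $\de M$ because its inner product with $\vc c$ equals $(1-t)\,\mathcal T\cdot\vc c+t\,R^2>0$; since the first component never vanishes along the way, I may simultaneously deform $\vc\tau$ to any fixed tangent field $W$ on $S^2$ while keeping it tangent. Thus $s|_{\de M}$ is homotopic, through nonvanishing sections of $E$, to $(\vc c\mapsto\vc c,\ W)$. Choosing $W$ generic with isolated zeros (say the gradient of a height function, one source and one sink), the model section $\tilde s(\vc c,\vc g)=(\vc c,W(\vc g))$ vanishes only on $\{\vc 0\}\times W^{-1}(\vc 0)$, and at each such point its linearization is block diagonal: the $\vc c$-block is the identity (local index $+1$) and the $\vc g$-block contributes $\mathrm{ind}(W)$. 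The signed count is therefore $(+1)\cdot\sum\mathrm{ind}(W)=\chi(S^2)=2\neq 0$. Since the relative Euler number of $s$ equals this (up to a global orientation sign), it is nonzero, so $s$ must vanish at some $(\bar{\vc c},\bar{\vc g})$ in the interior $B_R\times S^2$, yielding the desired common zero.

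The main obstacle is exactly why one cannot argue naively: one would like to pick, continuously in $\vc g$, a zero $\vc c(\vc g)$ of $\mathcal T(\cdot,\vc g)$ and then invoke the hairy-ball theorem for $\vc g\mapsto\vc\tau(\vc c(\vc g),\vc g)$, but the zero set of $\mathcal T(\cdot,\vc g)$ need not admit any continuous selection. Replacing this composition by a single global Euler-number count is what circumvents the difficulty, and it is the tangency condition that lowers the effective target dimension by one so that the count is balanced and finite. The remaining points are routine but must be handled with care: the orientation and sign bookkeeping in the block-diagonal index computation, and passing from the smooth transverse model back to the merely continuous maps $\mathcal T,\vc\tau$ by a standard smooth approximation that leaves the nonvanishing boundary data (hence the count) unchanged.
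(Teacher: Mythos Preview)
The paper does not supply its own proof of this lemma; it simply cites \cite[Lemme~4.4]{Ser87}. Your argument is correct and stands on its own. Packaging $(\mathcal T,\vc\tau)$ as a section of the rank-$(n{+}2)$ bundle $\underline{\R^n}\oplus p^{*}TS^2$ over the $(n{+}2)$-manifold $\overline{B_R}\times S^2$, and then computing its relative Euler number by homotoping the nonvanishing boundary data to the product model $(\vc c,W(\vc g))$, is a clean way to fuse the Brouwer-degree information in the $\vc c$ variable with the Poincar\'e--Hopf information in the $\vc g$ variable; the count $1\cdot\chi(S^2)=2\neq 0$ is exactly right. The two points you flag as ``routine'' are indeed so: the tangency condition $\vc\tau\cdot\vc g=0$ extends by continuity from $B_R\times S^2$ to $\overline{B_R}\times S^2$ and survives smooth approximation followed by orthogonal projection onto $T_{\vc g}S^2$, while the open condition $\mathcal T\cdot\vc c>0$ on $\de B_R\times S^2$ is stable under $C^0$-small perturbations.

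As for comparison: Serre's original argument is more elementary in its toolkit, working directly with Brouwer degree in the $\vc c$-slices together with the hairy-ball theorem, rather than invoking relative Euler classes. Your obstruction-theoretic packaging is more conceptual and makes the product structure of the count transparent, at the cost of assuming a bit more algebraic-topological machinery; the underlying topological content is the same.
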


\begin{thm}
The differential problem \eqref{eq:nl1}--\eqref{eq:nl7} admits a solution $(\vc u,p,\vc\xi,\vc\omega,\vc g)$.
\end{thm}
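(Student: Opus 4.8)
The plan is to construct a solution by a Galerkin approximation in the space $Y$, combined with the topological Lemma~\ref{lem:top} to enforce \emph{simultaneously} the weak flow equations and the geometric constraint \eqref{eq:nl7}, followed by a passage to the limit. First I would take the basis $\{\vc\phi_i\}_{i\in\N}$ of Lemma~\ref{lem:basis} and seek an approximate velocity $\vc u_m=\sum_{i=1}^m c_i\vc\phi_i$, whose value near $\bs$ is the rigid field $\vc\xi_m+\vc\omega_m\times\vc x$. For a fixed direction $\vc g\in S^2$, I would read the $k$-th instance of \eqref{eq:varform} with $\vc v=\vc\phi_k$ as the $k$-th coordinate of a map $\mathcal T(\vc c,\vc g)\in\R^m$, and set $\vc\tau(\vc c,\vc g):=\vc g\times\vc\omega_m$. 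Solving $\mathcal T=\vc 0$ is the Galerkin version of \eqref{eq:nl1}--\eqref{eq:nl6}, while $\vc\tau=\vc 0$ is exactly the constraint \eqref{eq:nl7}.

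To apply Lemma~\ref{lem:top} I must verify its two hypotheses. The condition $\vc\tau(\vc c,\vc g)\cdot\vc g=(\vc g\times\vc\omega_m)\cdot\vc g=0$ holds identically, which is the structural reason the lemma is tailored to this problem. The sign condition $\mathcal T(\vc c,\vc g)\cdot\vc c>0$ on $\de B_R\times S^2$ is an a priori energy estimate: testing with $\vc v=\vc u_m$ and using orthonormality, the convective term $\int[(\vc u_m-\vc U_m)\cdot\nabla]\vc u_m\cdot\vc u_m$ vanishes because $\vc u_m-\vc U_m$ is divergence free and $\vc u_m$ has compact support, while $(\vc\omega_m\times\vc u_m)\cdot\vc u_m$, $(\vc\omega_m\times\vc\xi_m)\cdot\vc\xi_m$ and $[\vc\omega_m\times(\vt J\vc\omega_m)]\cdot\vc\omega_m$ vanish as cross products; hence $\mathcal T(\vc c,\vc g)\cdot\vc c=|\vc c|^2-m_e\vc g\cdot\vc\xi_m+m_c(\vc r\times\vc g)\cdot\vc\omega_m$. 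By Lemma~\ref{lem:rigid}, $|\vc\xi_m|\le K|\vc c|$ and $|(\vc r\times\vc g)\cdot\vc\omega_m|=|\vc g\cdot(\vc\omega_m\times\vc r)|\le K|\vc c|$, so the expression is bounded below by $|\vc c|\bigl(|\vc c|-(m_e+m_c)K\bigr)$, which is positive once $R>(m_e+m_c)K$. Lemma~\ref{lem:top} then yields, for each $m$, a pair $(\vc c^{(m)},\vc g^{(m)})\in B_R\times S^2$ with $\mathcal T=\vc 0$ and $\vc\tau=\vc 0$.

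The same identity, now with $\mathcal T(\vc c^{(m)},\vc g^{(m)})\cdot\vc c^{(m)}=0$, gives the uniform bound $\norm{\vc u_m}_Y=|\vc c^{(m)}|\le(m_e+m_c)K$, whence $\vc\xi_m$ and the controlled combinations of $\vc\omega_m$ are bounded through Lemma~\ref{lem:rigid}, while $\vc g^{(m)}\in S^2$. I would extract a subsequence with $\vc u_m\rightharpoonup\vc u$ in $Y$, $\vc g^{(m)}\to\vc g$ with $|\vc g|=1$, and the rigid parameters converging to $\vc\xi,\vc\omega$. Since $\nabla\vc u_m$ is bounded in $W^{1,2}(\R^3;\R^9)$, the Rellich theorem gives strong convergence of $\vc u_m$ in $H^1$ on every ball, and as each $\vc\phi_k$ has compact support this suffices to pass to the limit in the convective term; the linear terms pass by weak convergence, and the $C^2$-approximation property \eqref{eq:approx} upgrades the identity from the $\vc\phi_k$ to arbitrary $\vc v\in V$, so that $(\vc u,\vc\xi,\vc\omega,\vc g)$ satisfies \eqref{eq:varform}, while passing to the limit in $\vc g^{(m)}\times\vc\omega^{(m)}=\vc 0$ recovers \eqref{eq:nl7}. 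Finally, choosing test functions supported away from $\bs$ recovers \eqref{eq:nl2} in $\R^3\setminus\bs$ together with the pressure $p$ as Lagrange multiplier of $\dvg\vc u=0$, and general $\vc v\in V$ reproduces \eqref{eq:nl5}--\eqref{eq:nl6} with $\vc f,\vc t$ given by \eqref{eq:defforce}, \eqref{eq:deftorque}.

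I expect the main obstacle to be the passage to the limit in the nonlinear convective term on the \emph{unbounded} domain $\R^3$: one must combine the uniform $Y$-bound with local Rellich compactness and the compact support of the test functions, and simultaneously ensure that the spin $\vc\omega_m$ entering $\vc U_m=\vc\xi_m+\vc\omega_m\times\vc x$ converges in the combinations actually controlled by Lemma~\ref{lem:rigid}. Verifying the coercivity sign condition for $\mathcal T$ uniformly in $\vc g\in S^2$ — i.e.\ that the quadratic dissipation dominates the gravitational forcing — is the decisive analytic input that renders the topological lemma applicable.
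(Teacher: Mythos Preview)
Your proposal is correct and follows essentially the same approach as the paper: Galerkin approximation on the basis of Lemma~\ref{lem:basis}, application of the topological Lemma~\ref{lem:top} with $\mathcal T$ and $\vc\tau$ defined exactly as you indicate, coercivity via the vanishing of the convective and cross-product terms combined with Lemma~\ref{lem:rigid}, and passage to the limit using weak $Y$-compactness plus strong $L^2_{loc}$ convergence. Your rewriting $m_c(\vc r\times\vc g)\cdot\vc\omega_m=m_c\,\vc g\cdot(\vc\omega_m\times\vc r)$ and bounding via $|\vc\omega_m\times\vc r|\le|\vc\xi_m|+|\vc\xi_m+\vc\omega_m\times\vc r|$ is a harmless variant of the paper's regrouping $-m_e\vc g\cdot\vc\xi_n+m_c(\vc r\times\vc g)\cdot\vc\omega_n=-(m_e+m_c)\vc g\cdot\vc\xi_n+m_c\,\vc g\cdot(\vc\xi_n+\vc\omega_n\times\vc r)$, and yields the same threshold $R>mK$.
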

\begin{proof}
We consider, for any $n\in\N$, the truncated expansions on the basis furnished by Lemma \ref{lem:basis}
\begin{equation*}
\vc u_n:=\sum_{i=1}^nc_i\vc\phi_i\,,\quad \vc\xi_n:=\sum_{i=1}^nc_i\vc\eta_i\,,\quad \vc\omega_n:=\sum_{i=1}^nc_i\vc\psi_i\,,
\end{equation*}
where $\vc\phi_i=\vc\eta_i+\vc\psi_i\times\vc x$ on $\bs$,
and apply Galerkin's approximation to equation \eqref{eq:varform}.

For any $n\in\N$, we consider the maps defined by
\begin{equation*}
\vc\tau(\vc c,\vc g):=\vc g\times\vc\omega_n\,,
\end{equation*}
and
\begin{multline*}
[\mathcal T(\vc c,\vc g)]_k:=\rey\int_{\R^3}\left\{[(\vc u_n-\vc\xi_n-\vc\omega_n\times\vc x)\cdot\nabla]\vc u_n+\vc\omega_n\times\vc u_n\right\}\cdot\vc\phi_k\\ +\rey(m\vc\omega_n\times\vc\xi_n)\cdot\vc\eta_k+\rey[\vc\omega_n\times(\vt J\vc\omega_n)]\cdot\vc\psi_k\\
+2\int_{\R^3}\sym\nabla\vc u_n\cdot\nabla\vc\phi_k+\ell^2\int_{\R^3}\lap\vc u_n\cdot\lap\vc\phi_k-m_e\vc g\cdot\vc\eta_k+m_c(\vc r\times\vc g)\cdot\vc\psi_k\,,
\end{multline*}
where we denoted by $\vc c$ the $n$-dimensional vector of the coefficients of the truncated expansion which defines $\vc u_n$.

It is immediate to verify that $\vc\tau$ and $\mathcal T$ are continuous maps and that, for any $(\vc c,\vc g)\in B_R\times S^2$ we have $\vc\tau(\vc c,\vc g)\cdot\vc g=0$. Moreover,
\begin{multline*}
\mathcal T(\vc c,\vc g)\cdot\vc c=\rey\int_{\R^3}\left\{[(\vc u_n-\vc\xi_n-\vc\omega_n\times\vc x)\cdot\nabla]\vc u_n+\vc\omega_n\times\vc u_n\right\}\cdot\vc u_n\\ +\rey(m\vc\omega_n\times\vc\xi_n)\cdot\vc\xi_n+\rey[\vc\omega_n\times(\vt J\vc\omega_n)]\cdot\vc\omega_n\\
+2\int_{\R^3}\sym\nabla\vc u_n\cdot\nabla\vc u_n+\ell^2\int_{\R^3}\lap\vc u_n\cdot\lap\vc u_n-m_e\vc g\cdot\vc\xi_n+m_c(\vc r\times\vc g)\cdot\vc\omega_n\\
=2\int_{\R^3}\vert\sym\nabla\vc u_n\vert^2+\ell^2\int_{\R^3}\vert\lap\vc u_n\vert^2-m_e\vc g\cdot\vc\xi_n+m_c(\vc r\times\vc g)\cdot\vc\omega_n\,.
\end{multline*}
In the previous equality we used the fact that
\begin{equation*}
\int_{\R^3}[(\vc u_n-\vc\xi_n-\vc\omega_n\times\vc x)\cdot\nabla]\vc u_n\cdot\vc u_n=0\,,
\end{equation*}
which is easily proved, since $\dvg\vc u_n=0$, via tensorial identities and integration by parts. Now, by using Lemma \ref{lem:rigid}, there exists a constant $C>0$, independent of $n$, such that
\begin{multline}\label{eq:coercive}
\mathcal T(\vc c,\vc g)\cdot\vc c=\norm{\vc u_n}_Y^2-m_e\vc g\cdot\vc\xi_n+m_c(\vc r\times\vc g)\cdot\vc\omega_n\\
=R^2-(m_e+m_c)\vc g\cdot\vc\xi_n+m_c\vc g\cdot(\vc\xi_n+\vc\omega_n\times\vc r)\\
\geq R^2-m(|\vc\xi_n|+|\vc\xi_n+\vc\omega_n\times\vc r|)\geq R^2-CR>0\,,
\end{multline}
for $R$ large enough and
whenever $(\vc c,\vc g)\in\de B_R\times S^2$. Hence, Lemma \ref{lem:top} guarantees the existence, for any $n\in\N$, of a pair $(\bar{\vc c},\vc g_n)$ that produces a solution $(\vc u_n,\vc\xi_n,\vc\omega_n,\vc g_n)$ for the approximate problem given by
\begin{equation*}
\vc g_n\times\vc\omega_n=0\,,
\end{equation*}
and
\begin{multline*}
\rey\int_{\R^3}\left\{[(\vc u_n-\vc\xi_n-\vc\omega_n\times\vc x)\cdot\nabla]\vc u_n+\vc\omega_n\times\vc u_n\right\}\cdot\vc\phi_k\\ +\rey(m\vc\omega_n\times\vc\xi_n)\cdot\vc\eta_k+\rey[\vc\omega_n\times(\vt J\vc\omega_n)]\cdot\vc\psi_k\\
+2\int_{\R^3}\sym\nabla\vc u_n\cdot\nabla\vc\phi_k+\ell^2\int_{\R^3}\lap\vc u_n\cdot\lap\vc\phi_k-m_e\vc g_n\cdot\vc\eta_k+m_c(\vc r\times\vc g_n)\cdot\vc\psi_k=0\,,
\end{multline*}
for any $k\in\N$.

By taking into account \eqref{eq:coercive}, we see that the sequence $\{\vc u_n\}_{n\in\N}$ is bounded in $Y$, and hence weakly convergent to some $\vc u\in Y$; moreover, $\vc u_n\to\vc u$ strongly in $L^2_{loc}(\R^3;\R^3)$, and, clearly, $\vc\xi_n\to\vc\xi$, $\vc\omega_n\to\vc\omega$, and $\vc g_n\to\vc g$ in $\R^3$, so that 
%
%
the limits of the previous sequences are such that
\begin{equation*}
\vc g\times\vc\omega=0\,,
\end{equation*}
and
\begin{multline*}
\rey\int_{\R^3}\left\{[(\vc u-\vc\xi-\vc\omega\times\vc x)\cdot\nabla]\vc u+\vc\omega\times\vc u\right\}\cdot\vc\phi_k\\ +\rey(m\vc\omega\times\vc\xi)\cdot\vc\eta_k+\rey[\vc\omega\times(\vt J\vc\omega)]\cdot\vc\psi_k\\
+2\int_{\R^3}\sym\nabla\vc u\cdot\nabla\vc\phi_k+\ell^2\int_{\R^3}\lap\vc u\cdot\lap\vc\phi_k-m_e\vc g\cdot\vc\eta_k+m_c(\vc r\times\vc g)\cdot\vc\psi_k=0\,,
\end{multline*}
for any $k\in\N$.
Finally, by suitable linear combinations of the $\vc\phi_k$'s and the approximation property granted by Lemma \ref{lem:basis}, we can show that $(\vc u,\vc\xi,\vc\omega,\vc g)$ is a solution of equations \eqref{eq:nl7} and \eqref{eq:varform} for any $\vc v\in V$. The pressure field $p$ needed for the solution of the problem \eqref{eq:nl1}--\eqref{eq:nl7} can be recovered as the Lagrange multiplier of the divergence-free constraint imposed on the space $V$.
\end{proof}
\hyphenation{Na-zio-na-le}
\section*{Acknowledgments} 
This research is partially supported by GNFM (Gruppo Nazionale per la Fisica Matematica).


\end{document}